\newcommand{\cA}{\mathcal{A}}
\newcommand{\cN}{\mathcal{N}}
\newcommand{\cP}{\mathcal{P}}
\newcommand{\cR}{\mathcal{R}}
\newcommand{\cU}{\mathcal{U}}
\newcommand{\bP}{\mathbf{P}}
\newcommand{\tr}{\mathbf{Tr}}
\newcommand{\E}{\mathbf{E}}
\def\namedlabel#1#2{\begingroup
   \fPtect\def\@currentlabel{#2}%
   \label{#1}\endgroup
}
\newcommand{\BlackBox}{\rule{1.5ex}{1.5ex}}
\renewenvironment{proof}{\par\noindent{\bfseries\upshape
  Proof\ }}{\hfill\BlackBox\\[2mm]}
\newtheorem{theorem}{Theorem}
\newtheorem{proposition}[theorem]{Proposition}
\begin{document}

\title{Resource Allocation for Statistical Estimation}
\author{Quentin Berthet$^\dag$ and Venkat Chandrasekaran$^{\dag,\ddag}$ \thanks{Email: qberthet@caltech.edu, venkatc@caltech.edu} \vspace{0.25in} \\ $^\dag$ Department of Computing and Mathematical Sciences \\ $^\ddag$ Department of Electrical Engineering \\ California Institute of Technology \\ Pasadena, CA 91125}

\maketitle

\begin{abstract}

%
%

Statistical estimation in many contemporary settings involves the acquisition, analysis, and aggregation of datasets from multiple sources, which can have significant differences in character and in value.  Due to these variations, the effectiveness of employing a given resource -- e.g., a sensing device or computing power -- for gathering or processing data from a particular source depends on the nature of that source.  As a result, the appropriate division and assignment of a collection of resources to a set of data sources can substantially impact the overall performance of an inferential strategy.  In this expository article, we adopt a general view of the notion of a resource and its effect on the quality of a data source, and we describe a framework for the allocation of a given set of resources to a collection of sources in order to optimize a specified metric of statistical efficiency.  We discuss several stylized examples involving inferential tasks such as parameter estimation and hypothesis testing based on heterogeneous data sources, in which optimal allocations can be computed either in closed form or via efficient numerical procedures based on convex optimization.


\end{abstract}

\textbf{Keywords}: Heterogeneous data sources, assignment problems, convex programming, resource tradeoffs in statistical estimation.

\section{Introduction} \label{SEC:intro}

Modern application domains throughout science and technology offer many opportunities for procuring and processing large amounts of data.  However, the effective deployment of resources for data acquisition and analysis is complicated by the fact that data are frequently obtained from multiple disparate sources, and the inferential objective involves an aggregation of these diverse datasets.  Specifically, different data sources typically have considerable variation in character and in value, and the effectiveness of a resource allotted to the treatment of a particular data source depends on the nature of the source.  Some examples of resources and their influence on the quality of a source include:
\begin{itemize}
\item \emph{Computing power}: algorithms employing more expensive processing and storage resources can improve the utility of a source.


\item \emph{Sensing devices}: in many scientific domains, using data acquisition devices more extensively or using more powerful sensors can provide data of better quality (e.g., larger datasets, data containing fewer errors).


\item \emph{Incentives for a population}: in settings involving surveys of a population, better incentives requiring a greater expenditure of resources on the part of the analyst can lead to higher quality data. For instance, participants may be more willing to provide informative answers (e.g., sacrifice some of their privacy) when given suitable inducements.
\end{itemize}
In each of these cases, the utilization of a resource involves a cost to the analyst.  Motivated by this observation, several researchers have investigated tradeoffs between the statistical accuracy of an inference algorithm and the amount of resources employed by the algorithm.  The tradeoff between statistical risk and computational resources has received a lot of attention \cite{DecGolRon98,Ser00,ChaJor13,ShaShaTom12,BerRig13b,WanBerSam14,MaWu13,Che13, FelGriRey13,FelPerVem13,ZhaDucWai13}, and those between risk and privacy constraints on estimation procedures have also been investigated recently \cite{AgrSri00,DucJorWai14}.

In this expository article, we study the \emph{optimal allocation} of resources in statistical estimation problems involving heterogeneous data sources.  In order to retain generality as well as broad applicability -- for example, to trade off and to allocate several types of resources simultaneously -- we adopt an abstract notion of a resource as a nondescript entity that is quantified by a real number.  Given $(i)$ functions that relate the quality of a data source to the amount of resource assigned to that source, and $(ii)$ a parameterized family of aggregation schemes (e.g., linear aggregators) for combining estimates obtained from multiple data sources, we design a joint strategy to allocate a set of resources to the different data sources and to aggregate estimates across the sources to optimize an overall metric of statistical efficiency.  From a technical as well as a conceptual point of view, our development differs from the literature on designing optimal methods for aggregating estimates from multiple data sources \cite{Bre96a,Bre96b,Wol92,BunTsyWeg07,Rig12,Bra13,BuhMei14}.  In particular, we consider only restricted families of (linear) aggregation schemes based on some knowledge about the distribution of the data, and the focus of our efforts is on the optimal allocation of resources to heterogeneous data sources.


\paragraph{Our stylized setup} Concretely, suppose there are $N$ independent heterogenous data sources, and in general terms, the source $i$ provides a random variable $\hat Y_i$ with loss $\ell_i \in \mathbb{R}$.  The loss is a measure of the imprecision associated to $\hat Y_i$, e.g., the variance of $\hat Y_i$, and it quantifies the accuracy of the source $i$.  The objective is to construct an aggregated estimator $\hat Y = a(\hat Y_1,\ldots, \hat Y_N)$ such that an overall loss $\Delta(a(\hat Y_1,\ldots,\hat Y_N);\ell_1,\ldots,\ell_N)$ is minimized:
\begin{equation*}
\min_{a \in \cA} \Delta(a(\hat Y_1,\ldots,\hat Y_N);\ell_1,\ldots,\ell_N).
\end{equation*}
Here $\cA$ is a constrained family of aggregation schemes.  Further, suppose that each of the losses $\ell_i$ is a function $\ell_i(r_i)$ of an amount $r_i \in \mathbb{R}$ of resource allocated to the source $i$; that is, the analyst utilizes the resource amount $r_i$ allotted to source $i$ and obtains in return a random variable $\hat Y_i$ from that source with loss $\ell_i(r_i)$.  As described above, the resources may be employed to acquire and/or process data, and the mapping $r_i \mapsto \ell_i(r_i)$ encodes the tradeoff between the quality of the source $i$ and the resource amount $r_i$ assigned to it.  In our abstraction, the analyst can only influence the quality of the source $i$ based on the resource amount $r_i$ allotted to the source (see Figure~\ref{FIG:scale} for an illustration).  Thus, in addition to choosing a suitable aggregator from the set $\cA$, the analyst also desires an allocation of resources to the $N$ sources to minimize the overall loss $\Delta(a(\hat Y_1,\ldots,\hat Y_N);\ell_1(r_1),\ldots,\ell_N(r_N))$:
\begin{equation}
\min_{r \in \cR} \min_{a \in \cA} \Delta(a(\hat Y_1,\ldots,\hat Y_N);\ell_1(r_1),\ldots,\ell_N(r_N)).  \label{EQ:generalra}
\end{equation}
\begin{figure}[h!]
    \begin{center}
    \begin{tabular}{cc}
      \includegraphics[width=.8\textwidth]{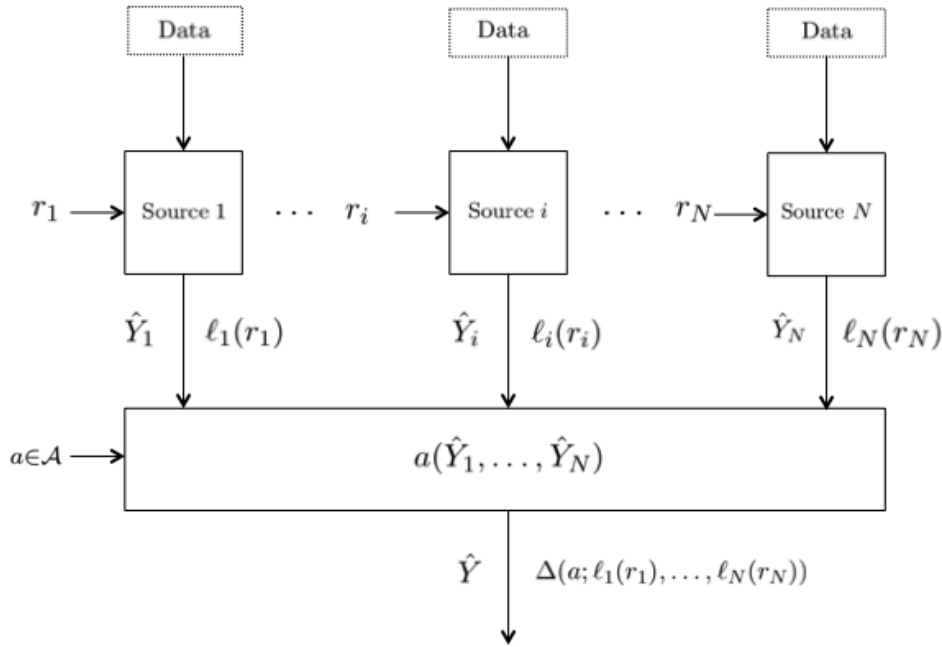} \\
    \end{tabular}
    \end{center}
    \vskip -2ex
    \caption{Illustration of our setup: the analyst chooses an allocation of resources $(r_1,\ldots,r_N) \in \mathcal{R}$ to the different sources and an aggregation scheme $a \in \mathcal{A}$ to obtain an estimate $\hat Y$ that minimizes the overall loss $\Delta(a; \ell_1(r_1),\ldots,\ell_N(r_N))$.}
    \label{FIG:scale}
\end{figure}

Here $\cR \subset \mathbb{R}^N$ encodes the constraints facing the analyst on the manner in which resources may be allotted to the different sources.  We assume that the overall loss function $\Delta$, the set of aggregators $\cA$, and the resource constraint set $\cR$ are specified by the analyst, and that the resource-quality tradeoff functions $\ell_i(r_i), ~ i=1,\ldots,N$ are known in advance.  The prior knowledge of these tradeoff functions may be a somewhat restrictive assumption in some cases; however, in many settings in which an inferential task is to be performed on a regular basis, the intrinsic qualities of different data sources and their dependence on various resources are feasible to estimate from past observations (e.g., financial asset modeling, marketing based on online surveys).  In such situations, the optimization problem \eqref{EQ:generalra} provides an optimal allocation of resources to minimize the overall loss $\Delta$.

As specific instances of the general setup outlined here, we describe two canonical settings.  In Sections \ref{SEC:vectestim} and \ref{SEC:generallinearparam} we discuss the problem of estimating an unknown parameter in $\mathbb{R}^d$ in which each of the $N$ sources provides information about the parameter in the form of a linear image of the parameter corrupted by Gaussian noise.  The two causes of heterogeneity in this case are the variations in the linear maps as well as the noise variances across the different sources.  We investigate the problem of optimal resource allocation when the noise variance of each source is influenced by the resource amount allocated to the source (as specified by a known tradeoff function).  Depending on the nature of the source, we demonstrate that in some cases it is better to allocate more resources to lower quality sources, while in others it is preferable to allocate more resources to higher quality sources.  In our next setting in Section~\ref{SEC:halfspace}, we study the optimal allocation of resources in a hypothesis testing task in which the objective is to determine whether an unknown parameter lies on a specified side of a given hyperplane.  Each source provides information about one of the coordinates of the parameter, with the precision of the estimate being dependent on the resource allocated to the source.  We consider cases in which the unknown parameter lies in the unit closed hypercube $[0,1]^d$ and in the set $\{0,1\}^d$; see Section~\ref{SEC:halfspace} for further details.

In these examples, we consider two types of resource constraint sets.  Perhaps the most elementary example of a resource constraint set $\mathcal{R}$ is one specified by a simplex:
\begin{equation*}
\cU_N = \{r \in \mathbb{R}^N: r \succeq 0, ~ r_1+\ldots+r_N \leq 1\}.
\end{equation*}
Such a resource constraint set corresponds to the situation in which resources are infinitely divisible.  A second type of constraint set that we consider is one in which there are $N$ possible resources with fixed resource amounts $r_1,\ldots,r_N$, and each resource can be assigned to exactly one of the $N$ data sources.  Such types of constraints are relevant if the resources are physical devices that are used to acquire or process data.  For each of these constraint sets, we describe conditions under which the optimal allocation of resources \eqref{EQ:generalra} can be computed efficiently.  We discuss cases in which the optimal allocation can in fact be obtained in closed form, as well as several others in which the optimal allocation can be computed numerically in a tractable manner via convex optimization.

\subsection{Related work} \label{SUBSEC:relatedwork}
Resource allocation is a prominent subfield of operations research, with an emphasis on computationally tractable techniques for obtaining optimal allocations in problem domains such as supply chain, logistics, and transportation.  In contrast to the applications considered in that literature, our emphasis is on the development of resource allocation strategies for statistical inference problems.  In the information sciences, a prominent example of a resource allocation problem is that of allocating power across a collection of independent communication channels of varying capacities for the purpose of maximizing overall throughput \cite{Gal68,CovTho91}.  In this case, the optimal allocation of power to the different channels is given by the famous water-filling formula \cite{Hol64}.  In the area of sensor resource management, the problem of optimal sensor placement can also be viewed from the perspective of resource allocation \cite{HerCasCoc07}.

Our setup is different from that of bandit problems in online learning, in which the quality / performance of each ``arm'' of a bandit (in our case, the sources) is unknown and the processing / aggregation is done in an online fashion as the data are acquired (see \cite{BubCes12} for more information on this problem, which was first studied by \cite{Tho33}). In comparison, in our setting the quality of a source as a function of resources allocated to the source is assumed to be known in advance, and the resource allocation optimization problem \eqref{EQ:generalra} is solved offline before any data are acquired or analyzed.

\subsection{Notation} \label{SUBSEC:notation}
For a positive integer $d$, the set $\{1,\ldots,d\}$ is denoted $[d]$.  The cardinality of a subset $S \subset [d]$ is denoted by $|S|$.  For $x \in \mathbb{R}^d$, we denote the $i$'th coefficient of the vector by $x_i$; the subvector of $x$ with coordinates corresponding to a subset $S \subset [d]$ is denoted by $x_S \in \mathbb{R}^{|S|}$.  For a collection $v_1,\ldots,v_n$ of vectors of $\mathbb{R}^d$, the $j$'th coefficient of $v_i$ is denoted by $v_i^{(j)}$ to avoid ambiguity.  For $u ,v\in \mathbb{R}^d$, we denote by $\langle u,v \rangle$ the Euclidean scalar product of $\mathbb{R}^d$ and by $\|u\|_2 = \sqrt{\langle u,u \rangle}$ the associated Euclidean norm of $u$. We denote by $\cU_d$ the unit simplex of $\mathbb{R}^d$, and by $\mathfrak{S}_n$ the symmetric group (the set of permutations of $n$ elements). When the choice of distribution is clear, the notations $\bP$ and $\E$ refer to the probability and expectation relative to that distribution.

\section{A Preliminary Example of Parameter Estimation from Heterogeneous Sources}
\label{SEC:vectestim}


\subsection{Problem description} \label{SUBSEC:vectestimintro}
We consider the problem of estimating a parameter $\theta \in \mathbb{R}^d$ based on $N$ independent data sources.  The sources provide independent random variables $\hat \theta_1,\ldots, \hat \theta_N$, each with mean $\theta$, and the losses $\ell_i$ corresponding to these sources are the mean squared errors of $\hat \theta_1,\ldots, \hat \theta_N$.  Thus, for each $i \in [d]$, allocating resource $r_i \ge 0$ to the data source $i$ yields an estimator $\hat \theta_i$ with mean squared errors
$$\E[\|\hat \theta_i - \theta \|_2^2] = \ell_{i}(r_i)\, ,$$
where $\ell_i$ is a positive, decreasing function. We consider the case in which the variances of the coefficients of $\hat \theta_i$ are identical.


We combine the estimators $\hat \theta_1,\ldots, \hat \theta_N$ by a linear aggregation scheme as follows:
\begin{equation}
\label{EQ:thetalambda}
\hat \theta_\lambda = a_\lambda(\hat \theta_1,\ldots, \hat \theta_N)
= \sum_{i=1}^N \lambda_i \hat \theta_i\, ,
\end{equation}
for $\lambda \in \cU_N$, i.e., the set $\cA$ of aggregations is given by the collection of convex combinations of the estimators $\hat \theta_1,\ldots,\hat\theta_N$.  As each of the estimators $\hat \theta_1,\ldots,\hat\theta_N$ is unbiased, we have that $\E[\hat \theta_\lambda] = \theta$.  Further, letting the overall loss $\Delta$ be the mean squared error of the estimator $\hat \theta_\lambda$, the independence of the data sources implies that:
$$\Delta(a_\lambda(\hat \theta_1,\ldots, \hat \theta_N); \ell_{1}(r_1),\ldots,\ell_{N}(r_N)) = \E[\|\hat \theta_\lambda - \theta \|_2^2] = \sum_{i=1}^N \lambda_i^2 \ell_{i}(r_i) \, .$$
Our objective is therefore to optimize both the allocation of resources (the variable $r$ in a resource constraint $\cR$) and the aggregation of the estimators (the variable $\lambda \in \cU_N$) in order to minimize the overall loss $\Delta$:
\begin{equation}
\label{EQ:optimglobal1}
\begin{aligned}
& \text{minimize} &&  \sum_{i=1}^N \lambda_i^2 \ell_{i}(r_i) &\\
& \text{subject to}&& \lambda \in \cU_N&\\
& && r \in \cR \\
& \text{variables}&& \lambda , r \in \mathbb{R}^N\, .&\\
\end{aligned}
\end{equation}
This optimization problem can be simplified as follows:

\begin{proposition}
\label{PRO:optimreduced}
For positive loss functions $\ell_i$, the optimization problem \eqref{EQ:optimglobal1} can be reformulated as
\begin{equation}
\begin{aligned}
\label{EQ:simplex}
& \text{minimize} &&  1 / \sum_{i=1}^N \ell^{-1}_i(r_i) &\\
& \text{subject to}&&  r \in \cR. \\
\end{aligned}
\end{equation}
\end{proposition}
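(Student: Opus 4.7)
The plan is to minimize over $\lambda$ first for each fixed $r \in \cR$, reducing the joint minimization \eqref{EQ:optimglobal1} to a minimization in $r$ alone. For the aggregator $\hat\theta_\lambda$ to be unbiased we use $\sum_i \lambda_i = 1$, so the inner problem is
\begin{equation*}
\min_{\lambda \succeq 0,\, \sum_i \lambda_i = 1} \sum_{i=1}^N \lambda_i^2 \, \ell_i(r_i).
\end{equation*}

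Because $\ell_i(r_i) > 0$, the cleanest way to solve this inner problem is the Cauchy--Schwarz inequality applied to the vectors $u_i = \lambda_i \sqrt{\ell_i(r_i)}$ and $v_i = 1/\sqrt{\ell_i(r_i)}$. This yields
\begin{equation*}
1 = \Bigl(\sum_i \lambda_i\Bigr)^2 = \Bigl(\sum_i u_i v_i\Bigr)^2 \le \Bigl(\sum_i \lambda_i^2\, \ell_i(r_i)\Bigr) \Bigl(\sum_i \frac{1}{\ell_i(r_i)}\Bigr),
\end{equation*}
so that $\sum_i \lambda_i^2 \ell_i(r_i) \ge 1 \big/ \sum_i \ell_i^{-1}(r_i)$, with the notation $\ell_i^{-1}(r_i) = 1/\ell_i(r_i)$ used in the statement. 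Equality in Cauchy--Schwarz holds when $u_i \propto v_i$, i.e.\ $\lambda_i \propto 1/\ell_i(r_i)$, which after normalization gives the explicit optimizer
\begin{equation*}
\lambda_i^\star(r) = \frac{1/\ell_i(r_i)}{\sum_{j=1}^N 1/\ell_j(r_j)}.
\end{equation*}
I would then check that $\lambda^\star(r)$ is feasible: the $\lambda_i^\star$ are nonnegative since $\ell_i > 0$, and they sum to $1$ by construction, so $\lambda^\star(r) \in \cU_N$ (in the convex-combination sense), and direct substitution confirms that the minimum value of the inner problem equals $1/\sum_i \ell_i^{-1}(r_i)$.

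Once the inner problem is solved, the outer problem becomes exactly \eqref{EQ:simplex}. A brief remark can be added that, alternatively, one can derive the same formula for $\lambda^\star$ from a Lagrangian calculation (the KKT stationarity condition $2\lambda_i \ell_i(r_i) = \mu$ forces $\lambda_i \propto 1/\ell_i(r_i)$), which makes the Cauchy--Schwarz step look less mysterious. There is no real obstacle here; the only point that warrants care is flagging that $\ell_i^{-1}$ in the statement means the reciprocal of $\ell_i(r_i)$ rather than the compositional inverse, and checking that the optimal weights lie in $\cU_N$ so the reformulation is exact rather than merely a lower bound.
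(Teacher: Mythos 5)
Your proof is correct and follows essentially the same route as the paper: fix $r \in \cR$, minimize the inner problem over $\lambda$ in the simplex, and substitute the optimal weights $\lambda_i^\star \propto 1/\ell_i(r_i)$ to obtain the value $1/\sum_i \ell_i^{-1}(r_i)$. The only difference is that you supply the verification (via Cauchy--Schwarz, with the KKT remark) that the paper leaves as ``one can check,'' and you correctly flag that the equality constraint $\sum_i \lambda_i = 1$ is what makes the variance formula coincide with the mean squared error.
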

\begin{proof}
In order to obtain this formulation we fix $r \in \mathcal{R}$ in \eqref{EQ:optimglobal1}, so that $\ell_i = \ell_i(r_i)$ is also fixed, and we optimize over $\lambda \in \cU_n$:
\begin{equation}
\label{EQ:optimsimplex}
\begin{aligned}
& \text{minimize} && \sum_{i=1}^N \lambda_i^2 \ell_{i}&\\
& \text{subject to}&& \lambda \in \cU_N. &\\
\end{aligned}
\end{equation}
The optimization problem \eqref{EQ:optimsimplex} projects the origin onto the unit simplex according to a reweighted $\ell_2$ norm.  One can check that the optimal solution is $\lambda_i^* = \ell_i^{-1} / \sum_{i'=1}^N \ell_{i'}^{-1}, ~ i=1,\ldots,N$, which corresponds to the optimal value $\E[\|\hat \theta_{\lambda^*} - \theta \|_2^2] =   1 / \sum_{i=1}^N \ell_i^{-1}$.
\end{proof}

The aggregate estimate based on these optimal weights is given by $\hat \theta_{\lambda^*} = \sum_{i=1}^N \ell_i^{-1} \, \hat \theta_i / (\sum_{i=1}^N \ell_i^{-1})$.  We note that the naive choice of $\lambda_i = 1/N$ would yield an overall loss of $\sum_{i=1}^N \ell_i/ N^2$, which is always bounded below by $1 / \sum_{i=1}^N \ell_i^{-1}$ based on the arithmetic-geometric-mean inequality.  It is worthwhile to notice (as an added justification of the optimality of this aggregated estimator) that in the case $\hat \theta_i = \cN(0,\ell_i)$, the estimator $\hat \theta_{\lambda^*}$ is the maximum-likelihood estimator of $\theta$ for known $\ell_i$'s.\\



It is sometimes more convenient to parameterize the tradeoff function $\ell_i$ via its inverse:
\begin{equation*}
q_i(r_i) = \ell^{-1}_i(r_i),
\end{equation*}
which can be viewed as the precision of the estimator $\hat \theta_i$.  Since the loss functions $\ell_i(r_i)$ are assumed to be positive, the precision functions $q_i(r_i)$ are also positive.  Consequently, with respect to this alternative parameterization and based on Proposition~\ref{PRO:optimreduced}, the optimization problem \eqref{EQ:optimglobal1} can be simplified as:
\begin{equation*}
\begin{aligned}
& \text{maximize} && \sum_{i=1}^N q_i(r_i) &\\
& \text{subject to}&&  r \in \cR .&\\
\end{aligned}
\end{equation*}
Next, we consider this optimization problem for two choices for the constraint set $\cR$.\\

\subsection{Simplex constraint} \label{SUBSEC:vectestimsimplex}

The simplest case of a resource constraint set $\mathcal{R}$ is one in which the resources are infinitely divisible and the total resource budget is $R >0$:
\begin{equation}
\begin{aligned}
\label{EQ:simplex}
& \text{maximize} &&  \sum_{i=1}^N q_{i}(r_i) &\\
& \text{subject to}&&  r \in R \cdot \cU_N &\\
\end{aligned}
\end{equation}
Here $R  \cdot \cU_N = \{r \in \mathbb{R}^N : r \succeq 0, ~ \sum_{i=1}^N r_i \leq R\}$ .  If the precision functions $q_i(r_i)$ are concave and tractable to compute, then the optimization problem \eqref{EQ:simplex} is a convex program that can be solved efficiently.  Indeed, the $q_i$'s being concave and non-decreasing corresponds to the case in which additional resources improve the quality of a source but with a ``diminishing returns'' effect, a situation that is quite natural in many settings.  We note that $q_i$ being positive, concave, and non-decreasing implies that $\ell_i$ is positive, non-increasing, and convex.


Perhaps the most natural example of a resource-loss tradeoff function is $\ell_i(r_i) = \sigma_i^2/r_i$ or $q_i(r_i) =r_i/\sigma_i^2$, where $\sigma^2_i$ may be viewed as the ``intrinsic'' error variance of each component of source $i$. In this case, allocating $r_i$ to data source $i$ may be viewed equivalently as sampling from source $i$ ``$r_i$ times.'' If $\sigma_1 \le \ldots \le \sigma_N$, the optimal solution of \eqref{EQ:simplex} is $r^* = (R,0,\ldots,0)$; that is, the optimal strategy is to allocate all the resources to the best data source, i.e., the one with the smallest intrinsic variance. The optimal aggregation is also to focus entirely on one source, and $\hat \theta_{\lambda^*} = a_{\lambda^*} = \hat \theta_1$.  The interpretation of this result is that it is optimal to simply sample from the best source.

This effect is mitigated if $\ell_i(r_i) = \sigma^2/r_i^\alpha$ for $0<\alpha<1$.  For such loss functions, the KKT conditions of \eqref{EQ:simplex} yield the following optimal solution:
$$r_i^* = R  \frac{(\sigma_i^2/\alpha)^{\frac{1}{\alpha-1}}}{\sum_{j=1}^n(\sigma_j^2/\alpha)^{\frac{1}{\alpha-1}}} \, .$$
Again the better sources receive a greater fraction of the allocated resource, although the best source does not exclusively receive all the resources.  When $\alpha\rightarrow 1$, this solution converges to the extreme case above of the optimal solution for $\alpha=1$ (all resources going to the best source).

Another interesting example of a precision function for which there is a closed-form solution with an illuminating interpretation is
$$q_i(r_i) = \frac{1}{\sigma^2_i} + \log\Big(1+\frac{r_i/R}{a_i}\Big)\, .$$
This setting models a situation in which each variance is initially $\sigma_i^2$, and where any positive resource $r_i>0$ allocated to a source improves the precision at a rate given by $a_i$ in a concave manner (independently of the initial variance). Minimizing the expected loss in this case is mathematically equivalent to maximizing the communication rate over $N$ channels by allocating power $r_i/R$ to the $i$'th transmitter (see \cite{Hol64,Gal68,CovTho91}). The solution to this problem is given by the well-known {\em water-filling} method
$$r_i^* = R \, \max\{0,A - a_i\} \, ,$$
where $A$ is chosen such that the $r_i^*$ sum to $R$. Here the optimal allocation strategy is blind to the initial quality of each source (i.e., not influenced by $\sigma^2_i$), but is based on the possible improvements realized by allocating resources. Assuming that the $a_i$'s are different, the resources are initially allocated to the source with lowest $a_i$ as that source is the one in which the initial marginal improvement is highest.  Once this improvement decreases to the level of the second highest marginal improvement, the resources are subsequently divided equally between these two sources, and so on.  This process is repeated until all resources are exhausted, which is the source of the name of this method.

These are just a few simple cases of resource allocation problems with closed-form solutions.  Finally, we note that for any concave precision functions $q_i$ (consistent with convex loss functions $\ell_i$), adding further convex inequalities to the resource constraint set $\mathcal{R}$ in the problem \eqref{EQ:simplex} still yields a convex program; these in turn can also be solved efficiently.  Our setup is therefore adaptable to further limitations on the allocated resources that can be expressed as convex constraints on $r$ (e.g., bound on the maximal or minimal amount allocated to each source, on the concentration of resources on a few sources). \\



\subsection{Assignment constraint} \label{SUBSEC:vectestimassign}

A qualitatively different type of constraint on the allocation to the setting above is the situation in which there are $N$ possible resources with fixed values $r_1,\ldots,r_N$, and each resource is assigned to exactly one data source. This would, for example, be the case for physical devices that acquire or process the data. In this setting, the optimization problem \eqref{EQ:optimglobal1} (via the reformulation \eqref{EQ:simplex}) becomes
\begin{equation}
\begin{aligned}
\label{EQ:assignment}
& \text{maximize} &&  \sum_{i=1}^N q_{i}(r_{\tau(i)}) &\\
& \text{subject to}&&  \tau \in \mathfrak{S}_N \, ,
\end{aligned}
\end{equation}
where $\mathfrak{S}_N$ corresponds to the set of permutations on $N$ elements.  This problem is known as the \emph{assignment} problem (and it is also a special case of the \emph{optimal transport} problem), and it can be solved efficiently using several methods, e.g., by linear programming or by the Hungarian algorithm \cite{Kuh55}.  In the linear programming approach, one considers the convex hull of the set of $N \times N$ permutation matrices, which gives an equivalent optimization problem to \eqref{EQ:assignment} in terms of the Birkhoff polytope $B_N$ of doubly stochastic matrices.  By taking $Q_{ij} = q_i(r_j)$, the problem \eqref{EQ:assignment} can be reformulated as
\begin{equation}
\begin{aligned}
\label{EQ:birkhoff}
& \text{maximize} &&  \tr(QM) &\\
& \text{subject to}&&  M \in B_N \, .
\end{aligned}
\end{equation}
There exists an optimal solution $M^*$ that is a permutation matrix, which specifies the optimal resource assignment.  This problem can be solved efficiently using standard solvers for linear programming.  One can also obtain closed-form solutions for special cases of $Q$.  For example, consider again the situation in which $\ell_{ij} = \sigma_i^2/\, r_j$, or $q_{ij} = r_j / \sigma_i^2$.  Assuming that the data sources as well as the resources are ranked by quality, i.e., $r_1 \ge \ldots \ge r_n$ and $\sigma_1^2 \le \ldots \le \sigma_n^2$, the matrix $Q$ has rank one and an optimal assignment is $\tau^*(i)=i$ due to the reordering inequality.  This problem can also be interpreted as a probabilistic version of the optimal transport problem. Suppose $(X,Y)$ are random variables with marginal distributions uniform on $\{r_1,\ldots,r_n\}$ and $\{1/\sigma^2_1,\ldots,1/\sigma^2_n\}$ respectively.  Finding the joint distribution that minimizes the expected cost $\mathbb{E}_{X,Y}[(x-y)^2]$ is equivalent to the optimization problem \eqref{EQ:birkhoff}.


As in Section~\ref{SUBSEC:vectestimsimplex}, it is again the case that better quality sources should be favored both in the choice of $\lambda$ and $r$.  More generally, the situation is the same for $Q_{ij} = \phi(r_j)/\, \sigma_i^2$ for any increasing function $\phi$. The function $\phi(r)=r^\alpha$ that we discussed for the simplex constraint is a special case of this more general class of functions.
\\

\section{Parameter Estimation from Linear Measurements} \label{SEC:generallinearparam}
We consider two successive generalizations of the linear parameter estimation problem of Section~\ref{SEC:vectestim}. We first study the setting in which each data source provides information about an arbitrary subset of the coefficients of $\theta \in \mathbb{R}^d$. We then generalize that problem further by investigating the case in which each data source provides an estimate of an arbitrary linear function of $\theta \in \mathbb{R}^d$.
\subsection{Sources with heterogenous supports} \label{SUBSEC:generalsupport}
\label{SEC:gaussmeansets}


The setting described in the previous section is a simple illustration of a more general class of problems that we consider next.  Source $i$ provides an estimate $\hat \theta_i \in \mathbb{R}^{|S_i|}$ of the vector $\theta_{S_i}$ corresponding to a subset $S_i \subseteq [d]$. One example is the case in which the $i$'th data source provides an estimate of the $i$'th coefficient of $\theta$.  In this situation there are $N = d$ sources and $S_i = \{i\}$. In the previous section, each $S_i = [d]$ is equal to the whole parameter set. Heterogeneity among the sources can manifest itself in terms of different loss functions $\ell_i(r_i)$ (e.g., the sources have different intrinsic variances, as in Section~\ref{SEC:vectestim}), in the set of coefficients estimated by each source, and in the different variances among coefficients of a given $\hat \theta_i$.  As before, we assume that the variable $\hat \theta_i$ has mean $\E[\hat \theta_i] = \theta_{S_i}$ for a given {\em observation set} $S_i \subset [d]$, and we have
$$\E[(\hat \theta_i^{(j)} - \theta^{(j)}_{S_i})^2] = \ell^{(j)}_i(r_i) \, .$$
That is, the variance of each component of $\hat \theta_i$ could be different, and is explicitly characterized as a function of the resource $r_i$.  Following the development in Section~\ref{SUBSEC:vectestimintro}, we consider first the optimal aggregation problem with $r$ fixed.  Let $\hat \Theta \in \mathbb{R}^{d \times N}$ be a matrix with columns $\hat \theta_1,\ldots,\hat \theta_N$ (these estimators are extended to $\mathbb{R}^d$ by appropriate zero-padding). For each $\Lambda \in \mathbb{R}^{d \times N}$, we consider the aggregated estimator
$$\hat \theta_\Lambda = a_\Lambda(\hat \theta_1,\ldots \hat \theta_N) = \text{diag}(\hat \Theta \Lambda^T)\, .$$
For each $j \in [d]$, let $I_j = \{i : j \in S_i\} \subseteq [N]$ be the $j$'th {\em reciprocal set} of the observation sets. We then have that the $j$'th coordinate $\hat \theta^{(j)}_\Lambda$ of $\hat \theta_\Lambda$ is described in terms of the $j$'th row $\hat \Theta^{(j)} \in \mathbb{R}^N$ of $\hat \Theta$ and the $j$'th row $\Lambda^{(j)} \in \mathbb{R}^N$ of $\Lambda$ as follows:
$$\hat \theta_\Lambda^{(j)} = \sum_{i=1}^N \Lambda_i^{(j)} \hat \Theta_i^{(j)} =  \sum_{i \in I_j} \Lambda_i^{(j)} \hat \Theta_i^{(j)} \, .$$
As before, we constrain our collection of aggregation schemes to suitable convex combinations of the estimates $\hat \theta_i$ via the following restriction on $\Lambda$: For each $j \in [d]$, the $j$'th row $\Lambda^{(j)} \in \mathbb{R}^N$ of $\Lambda$ satisfies the constraint that $\Lambda^{(j)} \in \cU_N$. We wish to minimize the overall loss

$$\Delta(a_\lambda(\hat\theta_1,\ldots,\hat\theta_N); \ell_{1}(r_1),\ldots,\ell_{N}(r_N)) = \E[\|\hat \theta_\Lambda - \theta \|_2^2] \, .$$
This yields the following optimization problem
\begin{equation}
\label{EQ:optimglobal}
\begin{aligned}
& \text{minimize} &&  \Delta(a_\lambda(\hat\theta_1,\ldots,\hat\theta_N); \ell_{1}(r_1),\ldots,\ell_{N}(r_N)) &\\
& \text{subject to}&& \Lambda^{(j)} \in \cU_N, ~ \forall \, j \in [d]&\\
& && r \in \cR \\
& \text{variables}&&  \Lambda \in \mathbb{R}^{N \times d}, r \in \mathbb{R}^N \, .
\end{aligned}
\end{equation}
By following the same line of reasoning described in Section~\ref{SEC:vectestim} (essentially the problem \eqref{EQ:optimglobal} is equivalent to $d$ parallel one-dimensional problems of the type considered in Section~\ref{SEC:vectestim}), the optimization over $\Lambda$ (with $r \in \mathbb{R}^N$ fixed) yields
\begin{equation}
\label{EQ:lambdastarcap}
\begin{aligned}
{\Lambda_i^{(j)}}^* &= {\ell_i^{(j)}}^{-1} / \sum_{i' \in I_j} {\ell_{i'}^{(j)}}^{-1} ~\text{for}~ i\in I_j \\ \E[(\hat \theta^{(j)}_{\Lambda^*} - \theta^{(j)} )^2] &= 1 / \sum_{i \in I_j} {\ell^{(j)}_i}^{-1} \\ \E[\|\hat \theta_{\Lambda^*} - \theta \|_2^2] &= \sum_{j=1}^d \frac{1} { \sum_{i \in I_j} {\ell^{(j)}_i}^{-1}}.
\end{aligned}
\end{equation}
Letting $q_i^{(j)} = {\ell_i^{(j)}}^{-1}$ and using the fact that the losses $\ell^{(j)}_i$ are positive, we have that \eqref{EQ:optimglobal} can be simplified as
\begin{equation}
\begin{aligned}
\label{EQ:setsoptim}
& \text{minimize} && \sum_{j=1}^d \frac{1} { \sum_{i \in I_j} q^{(j)}_i(r_i) } &\\
& \text{subject to}&&  r \in \cR \, .&
\end{aligned}
\end{equation}

The situation appears more complicated than in the previous case in Section~\ref{SEC:vectestim}, but this problem is still tractable to solve numerically under suitable conditions:
\begin{proposition}
Suppose each $q_i^{(j)}$ is a concave, non-decreasing, and positive function. Then the objective function $\sum_{j=1}^d \frac{1} { \sum_{i \in I_j} q^{(j)}_i(r_i) }$ of \eqref{EQ:setsoptim} is convex.
\end{proposition}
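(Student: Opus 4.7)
The plan is to decompose the objective as a sum and handle each summand separately. Since a sum of convex functions is convex, it suffices to show that for each fixed $j \in [d]$, the function
\[
f_j(r) = \frac{1}{\sum_{i \in I_j} q_i^{(j)}(r_i)}
\]
is convex in $r$. So the whole proof reduces to analyzing one such term.

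First I would set $g_j(r) = \sum_{i \in I_j} q_i^{(j)}(r_i)$ and observe that $g_j$ is concave: each summand $q_i^{(j)}(r_i)$ depends only on the single coordinate $r_i$ and is concave by hypothesis, and concavity is preserved under sums. Moreover, $g_j$ is strictly positive on the domain, since each $q_i^{(j)}$ is positive. (The monotonicity hypothesis on $q_i^{(j)}$ is not needed for this step; it matters for the original resource allocation interpretation but not for convexity of $f_j$.)

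Next I would invoke the standard scalar--vector composition rule: if $\varphi : \mathbb{R} \to \mathbb{R}$ is convex and non-increasing and $g : \mathbb{R}^N \to \mathbb{R}$ is concave with image in the domain of $\varphi$, then $\varphi \circ g$ is convex. I apply this with $\varphi(t) = 1/t$, which is convex and strictly decreasing on $(0,\infty)$, and with $g = g_j$, whose image lies in $(0,\infty)$ by positivity. This yields convexity of $f_j = \varphi \circ g_j$, and summing over $j$ completes the argument.

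There is no real obstacle here; the only point worth double-checking is that the composition rule applies to multivariate concave inner functions (not just scalar ones), which is indeed the standard statement — the one-line verification is that for $r, r' \in \mathbb{R}^N$ and $t \in [0,1]$, concavity of $g_j$ gives $g_j(tr + (1-t)r') \geq t g_j(r) + (1-t) g_j(r')$, and then non-increasingness of $\varphi$ followed by its convexity yields $\varphi(g_j(tr+(1-t)r')) \leq \varphi(t g_j(r) + (1-t) g_j(r')) \leq t \varphi(g_j(r)) + (1-t) \varphi(g_j(r'))$. If one prefers to avoid citing the composition rule, this three-line chain is itself the proof.
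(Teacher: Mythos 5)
Your proof is correct and follows essentially the same route as the paper's: both decompose the objective as a sum over $j$ and apply the composition rule with the convex, non-increasing map $y \mapsto 1/y$ on $\mathbb{R}_+$ and the concave, positive inner sum $\sum_{i \in I_j} q_i^{(j)}(r_i)$. Your added three-line verification of the composition rule and the remark that monotonicity of the $q_i^{(j)}$ is not needed are fine supplements but do not change the argument.
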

\begin{proof}
We use well-known rules of composition \cite{BoyVan04}. The functions
$\frac{1} { \sum_{i \in I_j} q^{(j)}_i(r_i) }$ are convex, as the function $y \mapsto 1/y$ is non-increasing and convex on $\mathbb{R}_+$, and the sum of the $q^{(j)}_i$'s is concave (as they are individually concave). The objective function $\sum_{j=1}^d \frac{1} { \sum_{i \in I_j} q^{(j)}_i(r_i) }$ is therefore convex, as it is a sum of convex functions.
\end{proof}

Thus, for any choice of $S_1,\ldots,S_N$, this problem can be numerically solved as a convex optimization problem.  In the following two examples corresponding to extreme cases of total redundancy (where the $S_i$ are all the same, equal to $[d]$) and total independence (where the $S_i$ are all disjoint, such as when $S_i = \{i\}$), we demonstrate the richness of this general setting.  In particular these examples illustrate that different types of support sets can substantially alter the optimal resource allocation strategies.\\

\paragraph{Total redundancy} Here each $S_i = [d]$ and we recover the example studied in Section~\ref{SEC:vectestim}.  Specifically, in \eqref{EQ:setsoptim} we set $\ell_i^{(j)} = \ell_i /d$ and $S_i=[d]$ (and hence $I_j=[N]$) for all $i \in [N]$ and $j \in [d]$.  For the precision functions $q^{(j)}_i(r_i) =r_i/\sigma_i^2$, the optimal strategy is to allocate all the resources to the best source (the one with smallest intrinsic variance $\sigma_i^2$), as discussed in Section~\ref{SEC:vectestim}.\\

\paragraph{Total independence}  In the other extreme, if all the sets $S_i$ are disjoint, we can assume without loss of generality that $S_i = \{i\}$ and $N=d$ (each $I_j$ is a singleton set) and the aggregation weights are $\Lambda_i^{(j)} = 1$ for $i \in I_j$ and $0$ otherwise.  We have from \eqref{EQ:setsoptim} that the optimal resource allocation is obtained by solving
\begin{equation}
\begin{aligned}
& \text{minimize} && \sum_{j=1}^d \frac{1} { q_i(r_i) } &\\
& \text{subject to}&&  r \in \cR \, .
\end{aligned}
\end{equation}
The contrast with the first extreme case of total redundancy can be made very apparent by again taking $q_i(r_i) =r_i/\sigma_i^2$.  When the constraint set is $R  \cdot \cU_N$, the KKT conditions yield the optimal strategy
$$r_i^* = R \frac{\sigma_i}{\sum_{i=j}^N \sigma_j}\, .$$
In this case, more of the resources are allocated to the \emph{lower-quality} sources. Unlike in the previous example, there is only one source of information for each coefficient of $\theta$.  Therefore, a single ``weak source'' can affect the overall performance of the inference procedure, and as a result the sources with greater variance (i.e., the weaker ones) receive priority in resource allocation in order to obtain the highest quality inferential outcome.  A similar reasoning also holds if the resource constraint set is changed to an assignment type constraint.  Suppose we have $N$ resources with fixed resource values $r_1,\ldots,r_N$; the analog of the optimization problem described in \eqref{EQ:assignment} is
\begin{equation}
\begin{aligned}
& \text{minimize} &&  \sum_{i=1}^N 1/q_{i}(r_{\tau(i)}) &\\
& \text{subject to}&&  \tau \in \mathfrak{S}_N \, .
\end{aligned}
\end{equation}
The optimal strategy is again the opposite of the one established in the total redundancy setting. If $\sigma_1 \le \ldots \sigma_N$ and $r_1 \ge \ldots \ge r_N$, the optimal assignment is $\tau^*(i) = n-i+1$ by using the reordering inequality on the inverse $\sigma_i^2/r_{\tau(j)}$.


\subsection{A numerical example} \label{SUBSEC:supportexample}
We consider optimal resource allocation in a statistical estimation task based on the setting described in Section~\ref{SUBSEC:generalsupport}, with dimension $d=10$ and $N=5$ sources.  These sources provide estimates of subsets of coordinates as described in the following table:
\begin{center}
\begin{tabular}{ |l|r| }
  \hline
  \multicolumn{2}{|c|}{Observation subsets} \\
  \hline
  $S_1$ & $\{3,5,7,10\}$ \\
  $S_2$ & $\{5,8,10\}$ \\
  $S_3$ & $\{2,7\}$ \\
  $S_4$ & $\{1,2,4,6,7,9\}$ \\
  $S_5$ & $\{3,4,7\}$\\
  \hline
\end{tabular}
\quad \quad \quad \quad
\begin{tabular}{ |l|r| }
  \hline
  \multicolumn{2}{|c|}{Reciprocal sets} \\
  \hline
  $I_1$ & $\{4\}$ \\
  $I_2$ & $\{3,4\}$ \\
  $I_3$ & $\{1,5\}$ \\
  $I_4$ & $\{4,5\}$ \\
  $I_5$ & $\{1,2\}$\\
  $I_6$ & $\{4\}$ \\
  $I_7$ & $\{1,3,4,5\}$ \\
  $I_8$ & $\{2\}$ \\
  $I_9$ & $\{4\}$ \\
  $I_{10}$ & $\{1,2\}$\\
  \hline
\end{tabular}
\end{center}
\begin{itemize}
\item {\em Linear precision}: If the precision functions are specified as $q_i^{(j)} = r_i/\sigma_i^2$, with $\sigma_i^2 = i$, then the optimal allocation of resources in $\cU_5$ that we obtain by solving the convex program \eqref{EQ:setsoptim} is
$$r^* \approx (0.194,0.207,0.00,0.599,0.00) \, .$$
This solution highlights the fact that precision functions of the form $q_i^{(j)} = r_i/\sigma_i^2$ (corresponding, for example, to the number of times a source is sampled) yield sparse optimal allocation strategies.  As a matter of fact, it is clear that we should have $r^*_5=0$: sources 1 and 4 both have a lower noise variance for the coordinates that appear in subset $S_5$ (i.e., $S_5 \subset S_4 \cup S_1$ and $\sigma_5 > \sigma_4>\sigma_1$).
\item {\em Power precision}: With the same setup as above, but the precision functions now given as $q_i^{(j)} = r_i^\alpha/\sigma_i^2$ (with $\sigma_i^2 = i$ and $\alpha=0.6$), the optimal allocation of resources is:
$$r^* \approx (0.160,0.177,0.018,0.631,0.014) \, .$$
Unlike the linear precision case, the optimal solution is not sparse anymore as the power $\alpha <1$; rather, the resource amounts allocated to some of the sources are quite small instead of being equal to $0$.


\end{itemize}

\subsection{Parameter estimation from general linear measurements} \label{SUBSEC:generallinearparam}
In this section, we take a somewhat more general viewpoint of estimating an unknown parameter $\theta \in \mathbb{R}^d$ from linear measurements than those described in the preceding discussions.  We associate to the $i$'th data source a linear functional specified by a vector $X^{(i)} \in \mathbb{R^d}$, and the source provides the following random variable:
\begin{equation}
\label{EQ:reglin}
y_i = \langle X^{(i)} , \theta \rangle +\varepsilon_i \,.
\end{equation}
The noise vector $\varepsilon \sim \mathcal{N}(0,P^{-1})$ is Gaussian, where $P \in \mathbb{S}^N_+$ is a positive definite precision matrix.  Letting $X \in \mathbb{R}^{N \times d}$ be a matrix with the $i$'th row being equal to $X^{(i)}$, we have that
$$y = X \theta + \epsilon \, .$$
We assume that $X$ is full rank, which implies in particular that $N \geq d$.  We consider the following aggregation of the components of $y$ to obtain the minimum-variance unbiased estimator of $\theta$:
\begin{equation}
\hat \theta =  a(y_1,\ldots,y_N) = (X^\top P X)^{-1} X^\top y. \label{EQ:mvue}
\end{equation}
As $\hat \theta - \theta^* \sim \cN(0,(X^\top P X)^{-1})$, the mean squared error of this estimator is given by
$$\E [\| \hat \theta - \theta^* \|^2_2] = \tr((X^\top P X)^{-1}) \,.$$

We parameterize resources by the precision matrix $P$, so that restrictions on the manner in which resources are allocated are specified via a constraint set $\mathcal{P} \subset \mathbb{S}^N_+$.  This is a generalization of the problems considered in Sections~\ref{SEC:vectestim} and \ref{SUBSEC:generalsupport}.  For example, suppose $X$ is composed of $N$ rectangular blocks of rows, such that the $i$'th block (corresponding to the $i$'th data source) consists of $|S_i|$ rows and the $j$'th row of the $i$'th block is $e_{S_i(j)}$.  Let $P$ be a diagonal matrix composed of $N$ segments such that the $i$-th segment has cardinality $|S_i|$, and where the $j$'th element of the $i$'th segment is $q_i^{(j)}(r_i)$.  With this choice of $X$ and $P$, we clearly recover the problem described in Section~\ref{SUBSEC:generalsupport}.

The function that maps $P \in \mathbb{S}^N_+$ to $\tr((X^\top P \, X)^{-1})$ can be shown to be convex based on standard composition rules, thus yielding the following result:
\begin{proposition}
If the resource constraint $\mathcal{P} \subset \mathbb{S}^N_+$ is a convex set, then minimizing the mean squared error $\E [\| \hat \theta - \theta \|^2_2]$ is equivalent to solving the following convex optimization problem:
\begin{equation}
\begin{aligned}
\label{EQ:setsoptim1}
& \text{minimize} && \tr((X^\top P \, X)^{-1}) &\\
& \text{subject to}&&  P \in \cP \, .
\end{aligned}
\end{equation}
\end{proposition}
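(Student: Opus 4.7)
The plan is to verify the two assertions of the proposition: first, that the optimization problem \eqref{EQ:setsoptim1} is indeed equivalent to minimizing the mean squared error over the constraint set $\cP$, and second, that this optimization problem is convex. The equivalence is immediate from the computation preceding the proposition: since $\hat \theta - \theta \sim \cN(0,(X^\top P X)^{-1})$ and hence $\E[\|\hat\theta - \theta\|_2^2] = \tr((X^\top P X)^{-1})$, minimizing the MSE over $P \in \cP$ is literally the program \eqref{EQ:setsoptim1}. All the work goes into showing that the objective is convex on $\mathbb{S}^N_+$ (restricted so that $X^\top P X$ is invertible, which under the standing assumption that $X$ has full column rank $d \le N$ happens for every $P \succ 0$, and by a limiting argument extends to the relevant portion of $\mathbb{S}^N_+$).

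The strategy I would use is to factor the objective as a composition $f(P) = g(\Phi(P))$, where $\Phi : \mathbb{S}^N \to \mathbb{S}^d$ is the linear (hence affine) map $P \mapsto X^\top P X$, and $g : \mathbb{S}^d_{++} \to \mathbb{R}$ is $g(S) = \tr(S^{-1})$. Since composition with an affine map preserves convexity, it then suffices to establish that $g$ is convex on $\mathbb{S}^d_{++}$.

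To prove convexity of $g$, I would use a variational representation of the matrix fractional function. Writing $\tr(S^{-1}) = \sum_{k=1}^d e_k^\top S^{-1} e_k$ for the standard basis $e_1,\ldots,e_d$, I would apply the identity
\[
v^\top S^{-1} v \;=\; \sup_{u \in \mathbb{R}^d} \bigl\{\, 2 u^\top v - u^\top S u \,\bigr\}, \qquad S \in \mathbb{S}^d_{++},
\]
which follows by first-order optimality of the concave quadratic $u \mapsto 2u^\top v - u^\top S u$ at $u = S^{-1}v$. The right-hand side is a pointwise supremum over $u$ of functions that are affine in $S$, so $S \mapsto v^\top S^{-1} v$ is convex on $\mathbb{S}^d_{++}$. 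Summing over the $d$ basis vectors preserves convexity, giving convexity of $g$. Composing with $\Phi$ gives convexity of $f$, and since $\cP$ is assumed convex, \eqref{EQ:setsoptim1} is a convex program, completing the proof.

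The only subtle step is the convexity of the matrix fractional function, but it is a standard fact admitting the clean one-line variational proof sketched above; no real obstacle is expected. If one wished to avoid the variational identity entirely, an alternative would be to verify convexity of $g$ along lines, i.e., show that $t \mapsto \tr\bigl((S_0 + t H)^{-1}\bigr)$ has nonnegative second derivative for $S_0 \succ 0$ and $H \in \mathbb{S}^d$; differentiating twice gives $2\tr(S_0^{-1} H S_0^{-1} H S_0^{-1}) = 2\tr\bigl((S_0^{-1/2} H S_0^{-1/2})^2 S_0^{-1}\bigr) \ge 0$, since the trace of a product of two positive semidefinite matrices is nonnegative.
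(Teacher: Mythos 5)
Your proof is correct and follows essentially the same route as the paper: both factor the objective as the convex function $M \mapsto \tr(M^{-1})$ composed with the linear map $P \mapsto X^\top P X$. The only difference is that the paper simply cites the convexity of $\tr(M^{-1})$ on the positive definite cone, whereas you supply a self-contained proof of that fact via the variational identity $v^\top S^{-1} v = \sup_u \{2u^\top v - u^\top S u\}$ (and a correct second-derivative check as an alternative), which is a nice addition but not a different argument.
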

\begin{proof}
The map $M \mapsto \tr(M^{-1})$ is convex over the set $\mathbb{S}^N_+$ \cite{LewSen05}, and the map $P \mapsto X^\top P X$ is a linear map.  Consequently, the map $P \mapsto \tr((X^\top P \, X)^{-1})$ is convex.
\end{proof}

Other measures of the performance of the estimator \eqref{EQ:mvue} may also be of interest.  For instance, if the focus of the user is on a high-probability guarantee on the deviation $\| \hat \theta - \theta^* \|^2_2$, rather than a guarantee in expectation, it is possible to modify our approach accordingly. Indeed, we have the following upper bound as a consequence of \cite[Lemma 1]{LauMas00}.
$$\bP[\| \hat \theta - \theta \|^2_2 > 2 \|(X^\top P \, X)^{-1}\|_F \sqrt{t} + 2\|(X^\top P \, X)^{-1}\|_{\text{op}} t ] \le e^{-t} \, ,$$
where $\hat \theta$ is as defined in \eqref{EQ:mvue}.  Therefore, in order to find an $\ell_2$ ball with a minimal upper bound on the radius and with confidence $1-\delta$ , one can solve the following optimization problem with $\lambda=\sqrt{\log(1/\delta)}$
\begin{equation}
\begin{aligned}
\label{EQ:setsoptim2}
& \text{minimize} && \|(X^\top P \, X)^{-1}\|_F + \lambda \|(X^\top P \, X)^{-1}\|_{\text{op}} &\\
& \text{subject to}&&  P \in \cP \, .
\end{aligned}
\end{equation}
This problem is also convex (if $\mathcal{P} \subset \mathbb{S}^N_+$ is a convex set) as the map $P \mapsto \|(X^\top P \, X)^{-1}\|$ is convex for any unitarily-invariant matrix norm $\|\cdot\|$ \cite{LewSen05}.

The optimization problems \eqref{EQ:setsoptim1} and \eqref{EQ:setsoptim2} can also be solved as convex programs if $P$ is fixed, and the optimization is over $X$, i.e., the resource allocation problem facing the analyst is one of optimizing the design matrix.  Without loss of generality, we may assume that $P=I_{N \times N}$, as convexity is preserved by composition with a linear function.  Furthermore, as shown in \cite[Proposition 6.1.]{LewSen05}, a function of the singular values of the form $f(X) = \phi \, \circ \, \sigma(X)$ is convex when $\phi$ is invariant under permutation of its argument and it is convex.  The functions considered above can be rewritten in the following manner, which demonstrates their convexity:
\begin{eqnarray*}
\tr[(X^\top X)^{-1}] &=& \sigma_1^{-2}(X)+\ldots+\sigma_d(X)^{-2}\\
\|(X^\top X)^{-1}\|_{F} &=& \sqrt{ \sigma_1^{-4}(X)+\ldots+\sigma_d(X)^{-4} }\\
\|(X^\top X)^{-1}\|_{\text{op}} &=& \max(\sigma_1^{-2}(X),\ldots,\sigma_d(X)^{-2})\,.
\end{eqnarray*}
We note that \cite[Proposition 6.2.]{LewSen05} also gives a convenient formula for the gradient (or subgradient) of such functions, which is useful in order to solve the associated optimization problems numerically.

\section{Halfspace decision} \label{SEC:halfspace}

We discuss a stylized hypothesis testing problem in order to highlight the applicability of our framework in problems beyond parameter estimation from linear measurements.  Given $c \in \mathbb{R}^d$ and $b \in \mathbb{R}$, the objective for the analyst is to decide whether an unknown $\theta \in \mathbb{R}^d$ is such that
$$\langle \theta , c \rangle > b \, .$$
In our model, the analyst obtains independent information about each coefficient $\theta_i$ of $\theta$ via $d$ independent sources that provide random variables $\hat \theta_1, \ldots, \hat \theta_d$. In this setting, the aggregation step is trivial: $\hat \theta = a(\hat \theta_1, \ldots, \hat \theta_d) = (\hat \theta_1, \ldots, \hat \theta_d)^\top$. The user can expend resource $r_i \geq 0$ on the $i$'th coefficient $\hat \theta_i$ subject to the constraint that $r_1+\ldots+r_d \le R$.  The statistical quality of the random variable $\hat \theta_i$ is governed by a distribution $\bP_{r_i}$ that depends on the resource amount $r_i$ allocated to source $i$.


\subsection{General setup} \label{SUBSEC:generalhalfspace}

Suppose without loss of generality that $\theta$ lies on one side of the hyperplane, with $\langle \theta , c \rangle = b+t$ for some $t>0$. The objective of the problem is to allocate $r_1,\ldots,r_d$ so as to minimize the probability of error
$$\bP_{r_1,\ldots,r_d}\big( \langle \hat \theta , c \rangle \le b \big) =  \bP_{r_1,\ldots,r_d}\big( \langle \theta - \hat \theta , c \rangle \ge t\big) \, .$$
This resource allocation problem is interesting and well-posed when the distribution $\bP_{r_i}$ of $\hat \theta_i$ is more concentrated around $\theta_i$ as $r_i$ increases.  This property can be formalized in a number of ways.  One approach is to require that for all open intervals $I$ that contain $\theta_i$, $\bP_{r_i}(\hat \theta_i \in I)$ must be nondecreasing as a function of $r_i$.  We investigate two specific examples of distributions having this property: in the first case the random variable $\hat \theta_i$ has mean $\theta_i$ and variance decreasing with increasing $r_i$, and in the second case we consider discrete distributions for which $\bP_{r_i}(\hat \theta_i \neq \theta_i)$ is decreasing in $r_i$.

In each of these cases, however, obtaining a closed form expression of $\bP_{r_1,\ldots,r_d}\big( \langle \theta - \hat \theta , c \rangle \ge t\big)$ is a hopeless endeavor in general, and our approach is to minimize an upper bound on this probability:
$$\bP_{r_1,\ldots,r_d}\big( \langle \theta - \hat \theta , c \rangle \ge t\big) \le \Delta_t(r_1,\ldots,r_d) \, .$$
Such upper bounds can be quite sharp in many cases due to the concentration of measure phenomenon, and we seek resource allocation strategies that are based on minimizing $\Delta_t(r_1,\ldots,r_d)$ over a set of possible resources $\cR$.  In the two following subsections, we illustrate this approach by considering cases in which $\theta \in [0,1]^d$ and $\theta \in \{0,1\}^d$.  These examples are motivated by stylized polling resource allocation problems, in which an analyst must decide how best to assign polling resources to states in order to predict the outcome of an election.  The coefficients of $c$ in such scenarios correspond to the weight (e.g., population, electoral votes) of a particular region, and $b$ corresponds to the threshold required for victory.  For simplicity, we assume that there are only two candidates participating in an election and that all voters cast their votes in favor of one of the two candidates.

\subsection{Direct election} \label{SUBSEC:directelec}

In the first example, we consider a direct election setting in a country with $d$ regions.  Here $c_i$ is the voting age population of region $i$, and this region gives $c_i \theta_i$ of its votes to candidate A for $\theta_i \in [0,1]$, i.e., $\theta_i \in [0,1]$ is the (unknown) proportion of candidates who vote for candidate A.  We assume that $\sum_i c_i = 1$ after suitable normalization, and that Candidate A is the winner with $\langle \theta, c \rangle = 1/2 + t$ for $t > 0$.

Of course, as the analyst does not know $\theta_i$ in advance, the goal is to estimate this quantity for each region in order to predict the outcome of the election.  Polling in region $i$ produces an estimate $\hat \theta_i$ of  $\theta_i$.  This estimate has mean $\theta_i$ and variance $\sigma_i^2(r_i)$ as a function of the resource amount $r_i$ allotted to region $i$.  The prediction rule is to declare a victory for candidate A if $\langle \hat \theta , c \rangle > 1/2$.

One can use Bernstein's inequality to obtain a bound on the probability of error of the decision rule:
$$\bP_{r_1,\ldots,r_d}\big( \langle \theta - \hat \theta , c \rangle \ge t\big) \le \exp\Big( - \frac{t^2/2}{ \sum_{i=1}^d c_i^2 \sigma_i^2(r_i) +t\|c\|_{\infty}/3} \Big) \, .$$
The upper bound $\Delta^{\text{dir}}_t(r_1,\ldots,r_d) = \exp\Big( - \frac{t^2/2}{ \sum_{i=1}^d c_i^2 \sigma_i^2(r_i) +t\|c\|_{\infty}/3} \Big)$ on the probability of error is an increasing function of $\sum_{i=1}^d c_i^2 \sigma_i^2(r_i)$, and therefore our resource allocation optimization problem can be expressed as
\begin{equation}
\begin{aligned}
\label{EQ:convexbudgetdirect}
& \text{minimize} &&  \sum_{i=1}^d c_i^2 \sigma^2_i(r_i) &\\
& \text{subject to}&&  r \in \cR \, .
\end{aligned}
\end{equation}
Notice that no prior knowledge of $t$ is needed in order to solve this minimization problem.  In settings in which there are ``diminishing returns'' with the expenditure of additional resources, the variance function $\sigma_i^2(r_i)$ is often well-approximated as being convex and decreasing. In such cases, the problem \eqref{EQ:convexbudgetdirect} is a convex program and can be solved efficiently.  

\subsection{Indirect election} \label{SUBSEC:indirectelec}
An alternative model for elections is the U.S. electoral college system (as well as several other parliamentary systems around the world) in which candidate A is allotted \emph{all} the electoral votes of region $i$ if more than half the voters in region $i$ cast their votes for candidate A.  In this model, for each $i\in [d]$ we have that $\theta_i \in \{0,1\}$.  There is an underlying fraction $\mu_i$ of voters from region $i$ that would vote for candidate $A$, and $\theta_i = \mathbf{1}\{\mu_i >1/2\}$.  The objective of polling in this scenario is to obtain estimates of $\mu_i$, and the nonlinearity associated with going from $\mu_i$ to $\theta_i$ must be taken into account in allocating polling resources to the different regions.


We consider a simplified setup in which the analyst knows a lower bound $\eta_i > 0$ on the margin $|\mu_i - 1/2|$ in advance for each of the regions, i.e., $\eta_i \leq |\mu_i - 1/2|$ for each $i$. Therefore, the analyst has a lower bound on the margin by which candidate A wins or loses a region, but not the precise margin $|\mu_i-1/2|$ (this suffices for our purposes as we minimize an upper bound on the probability of error below). Such information may, for instance, be estimated from past elections; see the numerical experiment in Section~\ref{SUBSEC:pollingexample} for an example.  We make the assumption that polling in each region yields a prediction $\hat \theta_i \in \{0,1\}$ such that $\ell_i(r_i) = \bP_{r_i}(\hat \theta_i \neq \theta_i) \le 1/2$ (i.e., polling yields better results than an unbiased coin flip).  For the sake of illustration, we assume that the probability of error is bounded by
\begin{equation}
\ell_i(r_i) = \tfrac{1}{2} \exp(-r_i \eta_i^2/2). \label{EQ:indirloss}
\end{equation}
(Roughly speaking, this relates to a situation in which polling the region $i$ with resource amount $r_i$ yields an estimate $\hat \mu_i$ of $\mu_i$ with distribution $\cN(\mu_i, 1/r_i)$, which may be viewed as ``polling $r_i$ voters'' in each state.)  These loss functions are known to the analyst since $\eta_i$ (our lower bound on the margin $|\mu_i - 1/2|$) is assumed to be known in advance.

The vector $\hat \theta$ has mean $\tilde \theta$, where $|\tilde \theta_i - \theta_i| = \ell_i(r_i)$, and variance bounded above by $\ell_i(r_i)$. Therefore, we have that
\begin{equation}
|\langle \E[\hat \theta],c \rangle - \langle \theta ,c \rangle| \le \sum_i \ell_i(r_i) c_i =: \beta(r). \label{EQ:bias}
\end{equation}
The quantity $\beta(r)$ can be interpreted as an upper bound on the bias in the polling results, and it is a consequence of the nonlinearity underlying indirect elections.  Suppose there exists $r \in \cR$ such that $\beta(r) < t$, i.e., there is an allocation of resources such that the polling bias is less than the actual advantage of the majority candidate; at the end of this section, we discuss the implications and some potential alternatives if this condition does not hold.  Then the probability of error of a decision rule $\hat \theta$ that predicts the victory of candidate A if $\langle \hat \theta, c \rangle > 1/2$ is bounded as
$$\bP_{r_1,\ldots,r_d}\big( \langle \theta - \hat \theta,c \rangle \ge t\big) \le \bP_{r_1,\ldots,r_d}\big( \langle  \theta- \E[ \hat \theta] ,c\rangle \ge t - \beta(r) \big) \, .$$
Note that determining even the probability on the right-hand-side is a computationally difficult problem in general -- specifically, this question is related to the well-known intractable problem of counting the number of vertices of the hypercube that lie on one side of a given hyperplane \cite{SteVemVig12}.  However, it is possible to obtain further useful upper bounds through Bernstein's inequality, which yields
$$\bP_{r_1,\ldots,r_d}\big( \langle \theta - \hat \theta,c \rangle \ge t\big) \le  \exp\Big( - \frac{(t-\beta(r))^2/2}{ \sum_{i=1}^d c_i^2 \ell_i(r_i) +\|c\|_{\infty} (t-\beta(r))/3} \Big) \, .$$
Consequently, minimizing this upper bound $\Delta^{\text{indir}}_t(r_1,\ldots,r_d) = \exp\Big( - \frac{(t-\beta(r))^2/2}{ \sum_{i=1}^d c_i^2 \ell_i(r_i) +\|c\|_{\infty} (t-\beta(r))/3} \Big)$ on the probability of error can be reformulated as follows, with $\gamma(r):= \sum_{i=1}^d c_i^2 \ell_i(r_i)$:
\begin{equation}
\begin{aligned}
\label{EQ:convexbudget}
& \text{minimize} &&   \frac{2 \, \gamma(r)}{(t-\beta(r))^2} + \frac{4}{3} \frac{\|c\|_{\infty}}{(t-\beta(r))} &\\
& \text{subject to}&&  r \in \cR\, .
\end{aligned}
\end{equation}
If $\cR$ is a convex set, this problem is again a convex program based on \eqref{EQ:indirloss} and \eqref{EQ:bias}.

To reiterate, our reasoning is valid only if there exists an allocation of resources $r \in \cR$ such that $\beta(r) - t < 0$.  If this is not the case, then there is no feasible resource allocation that can reliably predict the victory of candidate A (as the actual advantage of candidate A is $t$); this may, for example, be the case if there are several states with large vote-share $c_i$ and these states also have poor losses $\ell_i(r_i)$ so that a lot of polling resources need to be expended in order to obtain a reliable estimate.  A second issue that arises in practice is that the actual advantage factor $t$ is clearly not known in advance of an election.  Notice that the dependence of the bound $\Delta^{\text{dir}}_t(r_1,\ldots,r_d)$ on $t$ was not a complication in the case of direct elections in Section~\ref{SUBSEC:directelec} (we posed the resource allocation problem \eqref{EQ:convexbudgetdirect} solely in terms of $r$), but in the indirect setting $\Delta^{\text{indir}}_t(r_1,\ldots,r_d)$ is typically dependent on $t$ (even for other choices of $\ell_i(r_i)$ than the one presented here).  One approach to circumvent both these issues is to design a resource allocation for a particular precision $t_d$ that is chosen based on a desired accuracy on $\langle \hat v, c \rangle$.  As long as $t_d > \inf_{r \in \cR} \beta(r)$, it is feasible to minimize $\Delta^{\text{indir}}_{t_d}(r_1,\ldots,r_d)$ by solving the following convex program:
\begin{equation}
\begin{aligned}
\label{EQ:convexbudget2}
& \text{minimize} &&   \frac{2 \, \gamma(r)}{(t_d-\beta(r))^2} + \frac{4}{3} \frac{|c|_{\infty}}{t_d-\beta(r)} &\\
& \text{subject to}&&  r \in \cR \, .
\end{aligned}
\end{equation}

\subsection{A numerical example} \label{SUBSEC:pollingexample}
We consider the problem of allocating polling resources to predict the outcome of the 2016 U.S. presidential election based on data obtained from the results of the 2012 election.  We only count votes cast in favor of the two main candidates in each state and the district of Columbia, and consider these 51 ``regions'' as whole (i.e., we ignore the effect of Nebraska and Maine being able to split their electoral votes).  More broadly, our approach is necessarily simplified and doesn't take into account several other subtleties.  Nonetheless, our numerical results lead to some interesting observations regarding resource allocation problems that arise in inferential settings.

\begin{center}
\begin{tabular}{ |l|r| }
  \hline
  \multicolumn{2}{|c|}{Total resources $R=150,000$} \\
  \hline
  Florida & 38,558.3 \\
  Ohio & 16,448.7 \\
  North Carolina & 14,198.1 \\
  Virginia & 9100.0 \\
  Pennsylvania & 8787.8 \\
  Georgia & 4571.5 \\
  Colorado & 4500.0 \\
  Wisconsin &  3888.7\\
  Minnesota & 3443.6\\
  $\ldots$ & $\ldots$ \\
  Texas & 2098.4 \\
  California &  1495.8\\
  \hline
\end{tabular}
\quad \quad \quad \quad
\begin{tabular}{ |l|r| }
  \hline
  \multicolumn{2}{|c|}{Total resources $R=10,000$} \\
  \hline
  Florida & 9.9 \\
  Ohio & 13.7 \\
  North Carolina & 10.3 \\
  Virginia & 12.9 \\
  Pennsylvania & 387.0 \\
  Georgia & 558.2 \\
  Colorado & 12.6 \\
  Wisconsin &  19.5\\
  Minnesota & 713.5 \\
  $\ldots$ & $\ldots$ \\
  Texas & 1109.7 \\
  California &  1028.4\\
  \hline
\end{tabular}
\end{center}

Our approach to this problem is based on the setup described in Section~\ref{SUBSEC:indirectelec}.  We set $t=63/538$, the actual advantage in the electoral college of the winner of the 2012 election.  We let $\ell_i(r_i) = \tfrac{1}{2} \exp(-r_i\eta_i^2)$, where $\eta_i = |\mu_i-1/2|$ is the actual margin of victory/loss in state $i$ of the winner of the 2012 election, and $c \in \mathbb{R}^{51}$ is the set of normalized electoral votes.  The tables above describe the resource allocations computed using the convex program \eqref{EQ:convexbudget2} for constraint sets $\cR = R \cdot \cU_{51}$, where $R=150,000$ in the first example and $R=10,000$ in the second example.  We observe that when the overall budget is high ($R=150,000$), most of the resources are awarded to so-called ``swing-states'' that have a large number of electoral votes, and for which the vote is almost evenly split between the two main candidates; in other words, $\eta_i$ is close to $0$ (i.e., $\mu_i$ is close to $1/2$) based on the 2012 data.  However, when the analyst only has access to a small overall budget ($R=10,000$), the resources are concentrated on states that have a large number of electoral votes \emph{and} that can be reliably polled with a small amount of resources (for these states $\eta_i$ is far away from $0$, or equivalently $\mu_i$ is further away from $1/2$).  In particular, states that were close calls in 2012 are actually not allocated many resources even if they have a large number of electoral votes.

Hence, these numerical results suggest that there are two regimes: It is only worthwhile to allocate resources to states whose outcome is very hard to determine (the ``too close to call'' states) when there are enough resources available to make a prediction significantly better than a coin flip.  Otherwise, a better strategy is to focus resources on states that have a very high impact on the overall outcome, and to make a very good prediction for those states.



\section{Discussion} \label{SEC:discussion}

We have presented a general framework for the optimal allocation of resources in statistical inference problems involving heterogeneous data sources.  We demonstrate the utility of this framework through several concrete examples.  These illustrations highlight the interplay among different metrics of statistical efficiency, diverse models for the quality of a data source as a function of the resource allocated to it, and various constraints on the manner in which resources can be allocated to different data sources.


Our approach is intentionally general and our examples are idealized in many respects.  However, several refinements that may be of interest in practice could be examined in our framework.  As an example, one could investigate the robustness of the methods described here to imperfect knowledge of the quality of the sources, where the individual loss functions $\ell_i$ are only known within some uncertainty set (similar in spirit to the literature on robust optimization).  In other settings, data sources may not necessarily be independent and the resulting resource allocation questions must take into account any correlations between different sources.  The setup in Section~\ref{SUBSEC:generallinearparam} corresponding to parameter estimation from general linear measurements may be a good preliminary candidate for an extension in this direction, as the resource allocation problems \eqref{EQ:setsoptim1} and \eqref{EQ:setsoptim2} continue to remain tractable even for general convex resource constraint sets $\cP$ rather than just convex subsets of diagonal matrices (recall that $\cP$ specifies a set of resources parameterized by precision matrices).



In a different direction, we only consider regimes in which $n \ge d$ in the setup on parameter estimation from linear measurements in order to avoid ill-posed estimation problems.  The high-dimensional setting where $d>n$ is also of great interest, and generalizing our framework to those situations is an interesting question.  As is common in that literature, additional constraints on the unknown parameter $\theta \in \mathbb{R}^d$ to be estimated could help alleviate the curse of dimensionality, although these must be balanced with the computational consideration that the eventual resource allocation problem must be tractable to solve. \\

\bibliographystyle{amsalpha}
\bibliography{statebib2}

\end{document}